\title{The acceptance-rejection method for low-discrepancy sequences}
\author{Nguyet Nguyen\thanks{Department of Mathematics, Florida State University, Tallahassee FL 32306-4510 ({\tt nnguyen@math.fsu.edu}).}
        \and Giray \"{O}kten\thanks{Department of Mathematics, Florida State University, Tallahassee FL 32306-4510 ({\tt okten@math.fsu.edu}).}}
\begin{document}

\maketitle

\begin{abstract}
Generation of pseudorandom numbers from different probability distributions has been studied extensively in the Monte Carlo simulation literature. Two standard generation techniques are the acceptance-rejection and inverse transformation methods. An alternative approach to Monte Carlo simulation is the quasi-Monte Carlo method, which uses low-discrepancy sequences, instead of pseudorandom numbers, in simulation. Low-discrepancy sequences from different distributions can be obtained by the inverse transformation method, just like for pseudorandom numbers. In this paper, we will present an acceptance-rejection algorithm for low-discrepancy sequences. We will prove a convergence result, and present error bounds. We will then use this acceptance-rejection algorithm to develop quasi-Monte Carlo versions of some well known algorithms to generate beta and gamma distributions, and investigate the efficiency of these algorithms numerically. We will also consider the simulation of the variance gamma model, a model used in computational finance, where the generation of these probability distributions are needed. Our results show that the acceptance-rejection technique can result in significant improvements in computing time over the inverse transformation method in the context of low-discrepancy sequences.
\end{abstract}

\begin{keywords}
Acceptance-rejection method, low-discrepancy sequences, quasi-Monte Carlo, beta distribution, gamma distribution, variance gamma model.
\end{keywords}

\begin{AMS}
11K45, 65C05, 65C10, 65C60, 68U20
\end{AMS}

\pagestyle{myheadings}
\thispagestyle{plain}
\markboth{NGUYET NGUYEN AND GIRAY \"{O}KTEN}{ACCEPTANCE-REJECTION METHOD}

\section{Introduction}\label{intro}

The Monte Carlo simulation is a popular numerical method across sciences, engineering, statistics, and computational mathematics. In simple terms, the method involves solving a problem by simulating the underlying model using pseudorandom numbers, and then estimates the quantity of interest as a result of the simulation. Simulating the model involves generating pseudorandom numbers from various probability distributions used in the model. There is an extensive literature on algorithms that transform pseudorandom numbers from the uniform distribution to pseudorandom numbers from the target distribution (see, for example, Devroye \cite{devroye}, Fishman \cite{fishman}). Many of these algorithms are based on two main methods and their combinations: the inverse transformation method and the acceptance-rejection method. The latter is used especially when the inverse transformation method is computationally expensive.

An alternative numerical tool to the Monte Carlo (MC) method is the quasi-Monte Carlo (QMC) method. It is easier to describe these methods in the context of numerical integration. Both methods estimate the expected value of a random variable $X$, $E[X]$, using sample means $\frac{1}{N}\sum_{i=1}^{N}X_i$ where $X_1$, ..., $X_N$ are i.i.d. random variables from the distribution of $X$ in MC, and a low-discrepancy sequence (or, a QMC sequence) from the cumulative distribution function (CDF) $F(x)$ of $X$. The definition of low-discrepancy sequences and a comprehensive treatment of its theory can be found in Niederreiter \cite{Nied}. One reason QMC has become popular in some fields such as computational finance is its faster rate of convergence. Theoretical convergence rate of QMC is $O(N^{-1}(\log N)^s)$, where $s$ is the dimension of the integral in the computation of the expectation. This deterministic rate of convergence is asymptotically better than the probabilistic Monte Carlo rate of $O(N^{-0.5})$. However, in many applications, researchers have observed rates close to $O(N^{-1})$ for QMC. We will not discuss the reasons for this better than theoretical rate of convergence which involve concepts like effective dimension and decreasing importance of variables (\cite{caflisch2}).

How do we generate a QMC sequence from a distribution $F(x)$? The process is somewhat similar to MC. One starts with a QMC sequence from the uniform distribution on $(0,1)^s$ and then applies a transformation method to the sequence in order to obtain a sequence from the target distribution. Currently, the only main transformation method used for QMC is the inverse transformation method (the Box-Muller method is also applicable in QMC (\cite{okten goncu}), but its scope is smaller.) The acceptance-rejection method is usually avoided in QMC, though ``smoothed" versions of it were introduced by Moskowitz \& Caflisch \cite{caflisch} and Wang \cite{wang}. The reasons for this avoidance has to do with some theoretical difficulties that involve the inapplicability of Koksma-Hlawka type inequalities to indicator functions with infinite variation.

If the inverse transformation method is computationally expensive for a particular distribution, then its application to a QMC sequence can make the overall QMC simulation too expensive to provide any advantages over the MC simulation. An example of costly inverse transformation algorithm appears in the simulation of a stochastic process known as the variance gamma model by QMC. Avramidis et. al. \cite{lecuyer} comment on the additional cost of computing inverse of beta, gamma, and normal distributions, which are needed in the generation of the variance gamma model, and suggest that this additional cost needs to be considered while assessing the efficiency of different estimators.

In this paper, we present a QMC version of the acceptance-rejection method, prove a convergence result, and develop error bounds. We present QMC algorithms based on acceptance-rejection for the beta and gamma distributions. We illustrate the advantages of these algorithms, and their application to the variance gamma model, numerically. The availability of acceptance-rejection as a transformation method for QMC significantly broadens its scope.

\section{The acceptance-rejection algorithm}

The acceptance-rejection method is one of the standard methods used for generating distributions. Assume we want to generate from the density $f(x)$, and there is another density $g(x)$ (with CDF $G(x)$) we know how to sample from, say, by using the inverse transformation method. Assume the density functions $f(x), g(x)$ have the same domain, $(a,b)$, and there exists a finite constant $C=\sup_{x\in(a,b)}{f(x)/g(x)}$. Let  $h(x)=f(x)/Cg(x)$. The Monte Carlo acceptance-rejection algorithm is:\newline

\begin{algorithm}{Algorithm 1: Acceptance-Rejection algorithm to generate pseudorandom numbers from the density $f(x)$.}

\begin{enumerate}

\item Generate pseudorandom numbers $u$, $v$ from the uniform distribution on $(0,1)$

\item Generate $X$ from $g(x)$ by $X=G^{-1}(u)$

\item If $v\leq h(X)$ accept $X$; Otherwise reject $X$
\item Repeat Steps 1 to 3, until the necessary number of points have been accepted.
\end{enumerate}
\end{algorithm}

Acceptance-rejection is usually avoided in QMC because it involves integration of a characteristic function: this is the step that corresponds to accepting a candidate by a certain probability. Since characteristic functions can have infinite variation in the sense of Hardy and Krause, and since the celebrated Koksma-Hlawka inequality (\cite{Nied}) links the integration error to the variation of the integrand, researchers for the most part have stayed away from the acceptance-rejection method with low-discrepancy sequences. Two notable exceptions are Moskowitz and Caflisch \cite{caflisch} and Wang \cite{wang}. In these papers, smoothed versions of acceptance-rejection are introduced. These methods replace the characteristic functions by continuous ones, thereby removing functions with infinite variation. However, these smoothing methods can be very time consuming; if one considers efficiency (time multiplied by error), the smoothing method can be worse than crude MC simulation. We will present such examples in Section \ref{smooth section}. Perhaps for this reason, the smoothing methods have not gained much ground in applications.

For MC, acceptance-rejection is a very powerful tool. There are several specialized algorithms that combine acceptance-rejection with other techniques to obtain fast simulation methods for many distributions used in computing; for a recent reference see Fishman \cite{fishman}. Currently, the QMC method cannot be effectively used in these algorithms, since the smoothing techniques are expensive.

Let $ \{x_1,...,x_N\}$ be numbers obtained from a QMC algorithm that generates the distribution function $F(x)$. How well these numbers approximate $F(x)$ is given by the $F$-star discrepancy of $\{x_1,...,x_N\}$:

$$ D^*_F(x_1,...,x_N)=\sup_{\alpha \in [a,b]} \left\vert \frac{A([a,\alpha);\{x_1,...,x_N\})}{N}-F(\alpha) \right\vert $$
where $(a,b)$ is the support of $F$, and the function $A([a,\alpha);\{x_1,...,x_N\})$ counts how many numbers in $\{x_1,...,x_N\}$ belong to the interval $[a,\alpha)$. If $F$ is the uniform distribution, we simply write $D^*(x_1,...,x_N)$ and call it star discrepancy. Note that $F$-star discrepancy is the Kolmogorov-Smirnov statistic that measures the distance between the empirical and theoretical distribution functions. In our numerical results we will use the Anderson-Darling statistic which is a generalization of the Kolmogorov-Smirnov statistic (see \cite{stephens}). The Anderson-Darling statistic corresponds to the ``weighted" $F$-star discrepancy of a point set. More on the weighted discrepancy and corresponding Koksma-Hlawka type error bounds can be found in Niederreiter \& Tichy \cite{nied tich} and \"{O}kten \cite{okten error red}.

Next we introduce the acceptance-rejection method for low-discrepancy sequences.

\begin{algorithm}{Algorithm 2: QMC Acceptance-Rejection algorithm to generate a sequence whose $F$-star discrepancy converges to zero.}

\begin{enumerate}
\item Generate a low-discrepancy sequence $\omega$ from the uniform distribution on $(0,1)^2$
$$\omega=\{(u_i,v_i)\in (0,1)^2, i=1,2,...\}$$
\item For $i=1,2,...$
\begin{itemize}
\item Generate $X$ from $g(x)$ by $X=G^{-1}(u_i)$

\item If $v_i\leq h(X)$ accept $X$; otherwise reject $X$
\end{itemize}
\item Stop when the necessary number of points have been accepted.
\end{enumerate}
\end{algorithm}

The algorithm starts with a point set in $(0,1)^2$
\[
\omega_{N}=\{(u_i,v_i), i=1,...,N\}
\]
and then applies inversion (Step 2) to obtain the new point set
\[
P=\{(G^{-1}(u_i),v_i),i=1,...,N\}.
\]
Assume $\kappa(N)$ points are accepted at Step 2 of the algorithm. After a renumbering
of the indices, we obtain the set of ``accepted points'' in $(a,b)$:

\begin{equation} \label {accepted points}
Q_{\kappa(N)}=\{G^{-1}(u_1),...,G^{-1}(u_{\kappa(N)})\}.
\end{equation}

The next theorem shows that the accepted points have $F$-star discrepancy that goes to zero with $N$. This result generalizes Theorem 2.4 of Wang \cite{wang} who proves a similar convergence result when the density $g(x)$ is the uniform density on $(0,1)^s$, and $f(x)$ is a density function on $(0,1)^s$.
\begin{theorem}
\label{main}
We have
\begin{equation}
D^*_F(Q_{\kappa(N)})\rightarrow 0~as~N\rightarrow\infty
\end{equation}
where $D^*_F(Q_{\kappa(N)})$ is the $F$-star discrepancy of the point set $Q_{\kappa(N)}$.
\end{theorem}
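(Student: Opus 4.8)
The plan is to translate the one-dimensional counting of accepted points into a two-dimensional volume-counting problem in the unit square, where the uniform distribution of the driving low-discrepancy sequence can be exploited. First I would note that, since $g$ is a density on $(a,b)$ (taking $g>0$ so that $G$ is a continuous strictly increasing bijection), we have $G^{-1}(u_i)<\alpha$ exactly when $u_i<G(\alpha)$. Consequently the number of accepted points falling in $[a,\alpha)$ equals the number of indices $i\le N$ for which $(u_i,v_i)$ lies in the region
$$R_\alpha=\{(u,v)\in(0,1)^2:\ u<G(\alpha),\ v\le h(G^{-1}(u))\},$$
and $\kappa(N)$ itself is the number of $(u_i,v_i)$ lying in $R_b$. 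Here the constraint $v\le h(G^{-1}(u))\le 1$ makes sense because $C=\sup f/g$ forces $h\le 1$.

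Next I would compute the area of $R_\alpha$. The change of variables $x=G^{-1}(u)$, $du=g(x)\,dx$, gives
$$\lambda(R_\alpha)=\int_0^{G(\alpha)} h(G^{-1}(u))\,du=\int_a^\alpha \frac{f(x)}{Cg(x)}\,g(x)\,dx=\frac{F(\alpha)}{C},$$
so in particular $\lambda(R_b)=1/C$. The key observation is that each $R_\alpha$ is Jordan-measurable: its boundary lies in the union of three axis-parallel segments and the graph of the measurable function $u\mapsto h(G^{-1}(u))$, and the graph of a measurable function has two-dimensional Lebesgue measure zero (by Tonelli). Since $D^*(\omega_N)\to 0$ forces $\omega$ to be uniformly distributed in $(0,1)^2$, and uniformly distributed sequences equidistribute over every Jordan-measurable set, I obtain for each fixed $\alpha$ the two limits $A([a,\alpha);Q_{\kappa(N)})/N\to F(\alpha)/C$ and $\kappa(N)/N\to 1/C$. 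Taking the quotient yields the pointwise convergence $F_N(\alpha):=A([a,\alpha);Q_{\kappa(N)})/\kappa(N)\to F(\alpha)$ for every $\alpha\in[a,b]$; note that $\kappa(N)/N\to 1/C>0$ guarantees $\kappa(N)\to\infty$, so $F_N$ is eventually well defined.

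The last, and main, difficulty is that $D^*_F$ is a supremum over $\alpha$, whereas equidistribution over Jordan-measurable sets delivers convergence only for each fixed set $R_\alpha$, not uniformly over the family $\{R_\alpha\}$. To upgrade pointwise convergence to uniform convergence I would invoke the classical P\'olya-type argument: $F_N$ is nondecreasing with $F_N(a)=0$ and $F_N(b)=1$, and the limit $F$ is continuous because $f$ is a density. Given $\e>0$, continuity of $F$ lets me choose a partition $a=\alpha_0<\alpha_1<\cdots<\alpha_m=b$ with $F(\alpha_j)-F(\alpha_{j-1})<\e$; monotonicity of $F_N$ then sandwiches $F_N(\alpha)-F(\alpha)$ between $F_N(\alpha_{j-1})-F(\alpha_j)$ and $F_N(\alpha_j)-F(\alpha_{j-1})$ on each subinterval, reducing the supremum to the finitely many grid points $\alpha_0,\dots,\alpha_m$ at which convergence is already established. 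Letting $N\to\infty$ and then $\e\to 0$ gives $\sup_\alpha|F_N(\alpha)-F(\alpha)|=D^*_F(Q_{\kappa(N)})\to 0$, as claimed. I expect this final sandwiching step, together with the verification that each $R_\alpha$ is genuinely Jordan-measurable, to be the crux of the argument; the area computation and the identification of the counts are routine.
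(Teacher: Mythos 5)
Your proposal follows essentially the same route as the paper's proof: you define the same acceptance region in the unit square (the paper's $E(\alpha)$ is your $R_\alpha$), establish the same counting identity $A([a,\alpha);Q_{\kappa(N)})=A(E(\alpha);\omega_N)$ together with $\kappa(N)=A(E(b);\omega_N)$, and compute $Vol(E(\alpha))=F(\alpha)/C$ by the same change of variables. The one place you go beyond the paper is the final P\'olya-type sandwiching step: the paper's proof only shows that the local discrepancy at each \emph{fixed} $\alpha$ tends to zero and passes to the supremum implicitly, whereas your explicit use of the monotonicity of the empirical CDF and the continuity of $F$ to reduce the supremum to a finite grid is precisely what is needed to make that passage rigorous; this is an improvement in completeness rather than a different method. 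One small caveat: your justification of Jordan measurability of $R_\alpha$ is slightly off. The boundary of the hypograph $\{(u,v):v\le \phi(u)\}$ is contained in the graph of $\phi$ (plus edges of the square) only when $\phi=h\circ G^{-1}$ is continuous; for a merely measurable $\phi$ the closure of the hypograph can be far larger than the hypograph itself and the topological boundary can have positive measure, even though the graph itself always has planar measure zero by Tonelli. The paper invokes continuity of $h\circ G^{-1}$ at exactly this point, and you have the hypotheses to do the same ($G^{-1}$ is continuous once $g>0$, and $h=f/(Cg)$ is continuous under the paper's implicit assumptions on $f$ and $g$), so the fix is immediate.
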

\begin{proof}
We need to prove that for any $\alpha\in(a,b)$
\begin{equation}
|F_{\kappa(N)}(\alpha)-F(\alpha)|=\left\vert \frac{A([a,\alpha
);Q_{\kappa(N)})}{\kappa(N)}-F(\alpha)\right\vert \rightarrow0,
\end{equation}
where $F_{\kappa(N)}(\alpha)$ is the empirical CDF.
Define the set
\begin{equation}
\label{ealpha}
\begin{array}{ll}
E(\alpha)&=\left\{  (x,y)\in (0,1)^{2}:G^{-1}(x)<\alpha,y\leq h(G^{-1}
(x))\right\}\\
&=\left\{  (x,y)\in (0,1)^2 :x < G(\alpha),y\leq h(G^{-1}
(x))\right\}
\end{array}
\end{equation}
(for simplicity we will assume $G(x)$ is strictly increasing).
Consider a point $G^{-1}(u_i)\in(a,b),i\in\{1,...,N\}.$ This point
belongs to $Q_{\kappa(N)}$ and falls into $[a,\alpha)$ if and only if

\begin{enumerate}
\item $G^{-1}(u_i)<\alpha$,

\item $G^{-1}(u_i)$ is accepted in Step 2, i.e., $(G^{-1}%
(u_i), v_i)\in P$ is such that $v_i\leq h(G^{-1}(u_i))$.
\end{enumerate}
Therefore, $G^{-1}(u_i)\in[a,\alpha),i\in\{1,...,N\},$
if and only if $(u_i,v_i)  \in E(\alpha)$, which implies
\[
A([a,\alpha);Q_{\kappa(N)})=A(E(\alpha);\omega_{N}).
\]
Now, we work on the local discrepancy:
\begin{equation}
\label{local discrep ineq}
\begin{array}{cc}
& \left\vert \frac{A([0,\alpha);Q_{\kappa(N)})}{\kappa(N)}-F(\alpha)\right\vert \\
& =\left\vert \frac{N}{\kappa(N)}\frac{A(E(\alpha);\omega_{N})}{N}-\frac{N}
{\kappa(N)}Vol(E(\alpha))+\frac{N}{\kappa(N)}Vol(E(\alpha))-F(\alpha)\right\vert \\
& \leq\frac{N}{\kappa(N)}\left\vert \frac{A(E(\alpha);\omega_{N})}{N}
-Vol(E(\alpha))\right\vert +\left\vert \frac{N}{\kappa(N)}Vol(E(\alpha
))-F(\alpha)\right\vert.
\end{array}
\end{equation}

Here $Vol(E(\alpha))$ refers to the Lebesgue measure of the set $E(\alpha)$. Note that $\omega_{N}$ is a u.d. mod 1 sequence in $(0,1)^2$, and the boundary of the set $E(\alpha)$ has Lebesgue measure zero since $h(G^{-1}(x))$ is a continuous function on $(0,1)$. Thus, we have:
\begin{equation}
\label{lim}
\left\vert\frac{A(E(\alpha);\omega_{N})}{N}-Vol(E(\alpha))\right\vert\rightarrow0
\end{equation}
as $N\rightarrow\infty$. Substituting $\alpha=b$ in (\ref{local discrep ineq}), we obtain
\[
\frac{A(E(b);\omega_{N})}{N}\rightarrow Vol(E(b)).
\]
Indeed, note that $(u_i,v_i)$ from $\omega_{N}$ belongs to
$E(b)$ if and only if $v_i\leq h(G^{-1}(u_i)),$ which gives us all the
accepted points, i.e., $A(E(b);\omega_{N})=\kappa(N).$ Then, we have%
\begin{equation}
\label{lim vol}
\frac{\kappa(N)}{N}\rightarrow Vol(E(b)).
\end{equation}
Equations (\ref{lim}) and (\ref{lim vol}) imply the first term of the upper bound of inequality (\ref{local discrep ineq}) converges to zero.
To prove that the second term also goes to zero, it suffices to show that
\begin{equation}
\frac{Vol(E(\alpha))}{Vol(E(b))}-F(\alpha) = 0.
\end{equation}
From (\ref{ealpha}) we have
\begin{equation}
Vol(E(\alpha))=\int_0^{G(\alpha)}\int_0^{h(G^{-1}(x)}dydx=\int_0^{G(\alpha)}h(G^{-1}(x))dx.
\end{equation}
Change of variables yields: $u=G^{-1}(x)$, $du=dx/G'(G^{-1}(x))$, and thus
\begin{equation}
Vol(E(\alpha))=\int_a^{\alpha}h(u)G'(u)du=\int_a^{\alpha}h(u)g(u)du=\frac{1}{C}\int_a^{\alpha}f(u)du=\frac{F(\alpha)}{C}.
\end{equation}
Similarly, we have
\begin{equation}
Vol(E(b))=\frac{1}{C}\int_a^b f(u)du=\frac{1}{C},
\end{equation}
since $f$ is the density function on $(a,b)$. This completes the proof.

\end{proof}

Note that Theorem \ref{main} generalizes to the case when $X$ is an $s$-dimensional random vector in a straightforward way. In Algorithm 2, the low-discrepancy sequence $\omega$ would be replaced by an $(s+1)$-dimensional sequence
$$\omega=\{({\bf{u_i}},v_i)\in (0,1)^{s+1}, i=1,2,...\}$$
where ${\bf{u_i}}\in (0,1)^s.$

\section{Error bounds}
The classical QMC error bound is the celebrated Koksma-Hlawka inequality
$$ \left\vert \frac{1}{N}\sum_{n=1}^{N}f(x_{n})-\int_{\mathcal{X}}fd\mu \right\vert \leq V(f)D^*(x_1,...,x_N)$$
where $V(f)$ is the variation of $f$ in the sense of Hardy and Krause (\cite{Nied}). Indicator functions, unless some conditions are satisfied (\cite {owen variation}), have infinite variation and thus Koksma-Hlawka inequality cannot be used to bound their error. This has been the main theoretical obstacle for the use of low-discrepancy sequences in acceptance-rejection algorithms. As a remedy, smoothing methods (\cite{caflisch}, \cite{wang}) were introduced to replace the indicator functions by smooth functions so that Koksma-Hlawka is applicable. In this section we present error bounds that do not require the bounded variation assumption, and allow the analysis of our QMC Acceptance-Rejection algorithm. In the following section, we will compare our algorithm with the smoothing approach numerically.

Consider a general probability space $(\mathcal{X},
\mathcal{B},\mu)$, where $\mathcal{X}$ is an arbitrary nonempty set, $\mathcal{B}$ is
a $\sigma$-algebra of subsets of $\mathcal{X}$, and $\mu$ is a probability measure
defined on $\mathcal{B}$. Let $\mathcal{M}$ be a nonempty subset of $
\mathcal{B}$. For a point set $\mathcal{P}=\{x_{1},\ldots,x_{N}\}$ and $
M\subseteq \mathcal{X},$ define $A(M;\mathcal{P})$ as the number of elements in $
\mathcal{P}$ that belong to $M.$ A point set $\mathcal{P}$ of $N$ elements
of $\mathcal{X}$ is called $(\mathcal{M},\mu)$-uniform if
\begin{equation}
A(M;\mathcal{P})/N=\mu(M)  \label{uniform pt set equation}
\end{equation}
for all $M\epsilon\mathcal{M}$. The definition of $(\mathcal{M},\mu)$-uniform point sets is due to Niederreiter \cite{nied} who developed error bounds when uniform point sets are
used in QMC integration. A useful feature of these bounds is that
they do not require the integrand to have finite variation. We need the following result from G\"{o}nc\"{u} and \"{O}kten \cite{goncuoktencoll}:

\begin{theorem}
\label{collision}
If $f$ is any bounded $\mu$-integrable function on a probability space $(\mathcal{X},
\mathcal{B},\mu )$ and $\mathcal{M=}\{M_{1},...,M_{K}\}$ a partition of $\mathcal{X},$
then for a point set $\mathcal{P}=\{x_{1},...,x_{N}\}$ we have
\begin{equation}
 \left\vert \frac{1}{N}\sum_{n=1}^{N}f(x_{n})-\int_{\mathcal{X}}fd\mu \right\vert \leq
\sum_{j=1}^{K}\mu (M_{j})(G_{j}(f)-g_{j}(f))+ \\
 \sum_{j=1}^{K}\epsilon _{j,N}\max
(|g_{j}(f)|,|G_{j}(f)|)
\end{equation}
where $\epsilon _{j,N}=\left\vert A(M_{j};\mathcal{P})/N-\mu
(M_{j})\right\vert $, $G_{j}(f)=\sup_{t\in M_{j}}f(t)$ and $g_{j}(f)=\inf_{t\in M_{j}}f(t),$
$1\leq j\leq K.$
\end{theorem}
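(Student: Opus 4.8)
The plan is to prove a per-cell oscillation bound and then sum the errors over the partition, exploiting the fact that on each cell the integrand is pinched between its infimum and supremum. First I would rewrite the integration error by introducing a piecewise-constant comparison function: on each cell $M_j$ replace $f$ by a constant $c_j$ chosen somewhere in $[g_j(f),G_j(f)]$. The natural split is
\begin{equation}
\frac{1}{N}\sum_{n=1}^{N}f(x_{n})-\int_{\mathcal{X}}fd\mu = \left(\frac{1}{N}\sum_{n=1}^{N}f(x_{n})-\sum_{j=1}^{K}c_j\frac{A(M_j;\mathcal{P})}{N}\right)+\left(\sum_{j=1}^{K}c_j\frac{A(M_j;\mathcal{P})}{N}-\int_{\mathcal{X}}fd\mu\right),
\end{equation}
where the middle quantity $\sum_j c_j A(M_j;\mathcal{P})/N$ serves as the pivot. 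Because $\{M_1,\ldots,M_K\}$ is a partition, $\sum_j A(M_j;\mathcal{P})=N$ and $\sum_j \mu(M_j)=1$, so both the empirical average and the true integral decompose cleanly cell by cell.

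Next I would bound the two bracketed terms separately. For the first term, I would write $\frac{1}{N}\sum_n f(x_n)=\sum_j \frac{1}{N}\sum_{x_n\in M_j}f(x_n)$ and note that for each point $x_n\in M_j$ we have $g_j(f)\le f(x_n)\le G_j(f)$, so choosing $c_j=g_j(f)$ makes $0\le f(x_n)-c_j\le G_j(f)-g_j(f)$. This gives a bound of $\sum_j \frac{A(M_j;\mathcal{P})}{N}(G_j(f)-g_j(f))$ on the first term. For the second term I would use $\int_{M_j}f\,d\mu$ lying in $[g_j(f)\mu(M_j),G_j(f)\mu(M_j)]$ together with the same choice of $c_j$, and then replace the empirical cell mass $A(M_j;\mathcal{P})/N$ by $\mu(M_j)$ at the cost of the discrepancy term $\epsilon_{j,N}=|A(M_j;\mathcal{P})/N-\mu(M_j)|$ multiplied by $|c_j|\le\max(|g_j(f)|,|G_j(f)|)$.

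The slightly delicate bookkeeping — and the step I would watch most carefully — is reconciling the two cell-mass quantities so that the final bound reads with $\mu(M_j)$ (not the empirical mass) multiplying the oscillation $G_j(f)-g_j(f)$, as the theorem states. The clean way is to keep $\epsilon_{j,N}$ as the only place where the empirical-versus-true mass gap appears: I would add and subtract $\mu(M_j)(G_j(f)-g_j(f))$ inside the first term, absorbing the difference $(A(M_j;\mathcal{P})/N-\mu(M_j))(G_j(f)-g_j(f))$ into the $\epsilon_{j,N}$ sum. One must check that this reassignment is consistent with the $\max(|g_j(f)|,|G_j(f)|)$ factor rather than the oscillation; since $G_j(f)-g_j(f)\le 2\max(|g_j(f)|,|G_j(f)|)$ need not hold in general, the more careful route is to bound $|f(x_n)-c_j|$ and $|{\textstyle\int_{M_j}}f\,d\mu/\mu(M_j)-c_j|$ each by the oscillation and keep the mass-gap contribution bounded by $\epsilon_{j,N}|c_j|$, which is exactly $\epsilon_{j,N}\max(|g_j(f)|,|G_j(f)|)$ after optimizing the choice of $c_j$.

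Finally I would apply the triangle inequality to combine the two bounds and sum over $j=1,\ldots,K$, yielding the claimed inequality. Throughout I would use only that $f$ is bounded (so $g_j(f),G_j(f)$ are finite) and $\mu$-integrable (so each $\int_{M_j}f\,d\mu$ is well-defined), never invoking bounded variation — which is precisely the feature that makes this bound applicable to the indicator-type integrands arising in the acceptance-rejection setting.
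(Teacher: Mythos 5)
The paper does not actually prove this theorem; it is quoted from G\"{o}nc\"{u} and \"{O}kten \cite{goncuoktencoll}, so your attempt can only be judged against the standard argument. Your architecture --- decompose through a piecewise-constant pivot $\sum_j c_j A(M_j;\mathcal{P})/N$, bound the oscillation cell by cell, and charge the empirical-versus-true mass gap to $\epsilon_{j,N}|c_j|$ --- is the right one, but the step you yourself flag as delicate is not closed, and the repair you propose does not close it. With your explicit choice $c_j=g_j(f)$, the first bracket is bounded by $\sum_j \frac{A(M_j;\mathcal{P})}{N}\,(G_j(f)-g_j(f))$, i.e.\ the oscillation is weighted by the \emph{empirical} mass. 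Converting $A(M_j;\mathcal{P})/N$ to $\mu(M_j)$ there costs an extra $\sum_j\epsilon_{j,N}(G_j(f)-g_j(f))$, and $G_j(f)-g_j(f)\le\max(|g_j(f)|,|G_j(f)|)$ can fail (take $g_j=-1$, $G_j=1$), so the stated constant is not recovered. Your suggested fix --- bounding $|f(x_n)-c_j|$ by the oscillation --- leaves the weight $A(M_j;\mathcal{P})/N$ exactly where it was, so it does not address the problem. (Also, your aside that $G_j(f)-g_j(f)\le 2\max(|g_j(f)|,|G_j(f)|)$ ``need not hold'' is incorrect: that inequality always holds; the one that can fail is the version without the factor $2$, which is the one you would actually need.)

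The missing idea is a one-sided choice of $c_j$, different for the two directions of the bound. For the upper bound take $c_j=G_j(f)$: then every term $f(x_n)-c_j$ with $x_n\in M_j$ is $\le 0$, so the first bracket can be discarded; the entire oscillation lands on the integral term $\sum_j\bigl(G_j(f)\mu(M_j)-\int_{M_j}f\,d\mu\bigr)\le\sum_j\mu(M_j)(G_j(f)-g_j(f))$, which carries $\mu(M_j)$ natively; and the mass-gap term is $\sum_j G_j(f)\bigl(A(M_j;\mathcal{P})/N-\mu(M_j)\bigr)\le\sum_j\epsilon_{j,N}\max(|g_j(f)|,|G_j(f)|)$. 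For the lower bound take $c_j=g_j(f)$ and argue symmetrically. Equivalently and more directly: sandwich $\sum_j g_j(f)A(M_j;\mathcal{P})/N\le N^{-1}\sum_n f(x_n)\le\sum_j G_j(f)A(M_j;\mathcal{P})/N$ and $\sum_j g_j(f)\mu(M_j)\le\int_{\mathcal{X}}f\,d\mu\le\sum_j G_j(f)\mu(M_j)$, and cross-subtract. With that substitution your proof is complete; the rest of your write-up (the partition identities, the role of boundedness and integrability, and the observation that no bounded-variation hypothesis is needed) is correct.
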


Theorem \ref{collision} provides a general error bound for any point set $\mathcal{P}$. If the point set is an $(\mathcal{M},\mu)$-uniform point set then the second summation on the right hand side becomes zero and the result simplifies to Theorem 2 of Niederreiter \cite{nied}. Setting $f=1_{S}$, the indicator function of the set $S$, in Theorem \ref{collision}, we obtain a simple error bound for indicator functions:
\begin{corollary}
\label{cor}
Under the assumptions of Theorem \ref{collision}, we have
$$
\label{indicator ineq}
\left\vert \frac{A(S;\mathcal{P})}{N}-\mu(S) \right\vert \leq
\sum_{j=1}^{K}\mu (M_{j})(G_{j}(1_{S})-g_{j}(1_{S}))+\epsilon _{j,N}.
$$
\end{corollary}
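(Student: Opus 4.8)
The plan is to obtain the corollary as a direct specialization of Theorem \ref{collision}, taking the integrand to be the indicator $f = 1_S$. First I would verify that the left-hand side of Theorem \ref{collision} collapses to the desired quantity: the empirical average $\frac{1}{N}\sum_{n=1}^{N} 1_S(x_n)$ counts exactly the fraction of points of $\mathcal{P}$ lying in $S$, namely $A(S;\mathcal{P})/N$, while $\int_{\mathcal{X}} 1_S\, d\mu = \mu(S)$. Thus the left side of the theorem becomes $\left\vert A(S;\mathcal{P})/N - \mu(S) \right\vert$, matching the statement of the corollary.

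Next I would examine the two terms on the right. For each cell $M_j$ of the partition, the supremum and infimum of the indicator over $M_j$ are $G_j(1_S), g_j(1_S) \in \{0,1\}$, both nonnegative and satisfying $g_j(1_S) \le G_j(1_S)$. Consequently $\max(|g_j(1_S)|,|G_j(1_S)|) = G_j(1_S)$, which simplifies the second summation of Theorem \ref{collision} to $\sum_{j=1}^{K} \epsilon_{j,N} G_j(1_S)$, while the first summation is inherited unchanged as $\sum_{j=1}^{K} \mu(M_j)(G_j(1_S) - g_j(1_S))$.

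The final step, and the only place any estimation enters, is to loosen the second term via $G_j(1_S) \le 1$, so that $\sum_{j=1}^{K} \epsilon_{j,N} G_j(1_S) \le \sum_{j=1}^{K} \epsilon_{j,N}$; collecting the two summations under a single index then yields the stated bound. I do not expect a genuine obstacle, since the result is a straightforward specialization; the one point worth stating carefully is that the boundedness of the indicator, $0 \le 1_S \le 1$, is precisely what allows the $\max$ factor in Theorem \ref{collision} to be absorbed and replaced by the clean $\epsilon_{j,N}$ term.
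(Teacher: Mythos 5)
Your proposal is correct and follows exactly the paper's route: the paper obtains the corollary by setting $f = 1_S$ in Theorem \ref{collision} and absorbing the factor $\max(|g_j(1_S)|,|G_j(1_S)|) \le 1$ into the $\epsilon_{j,N}$ term, precisely as you do. Your write-up is in fact slightly more explicit than the paper's one-line derivation, but the argument is the same.
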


Now consider Algorithm 2 (QMC Acceptance-Rejection algorithm) where a low-discrepancy sequence is used to generate the point set $Q_{a(N)}$ (see (\ref {accepted points})). We proved that $|A([a,\alpha);Q_{a(N)})/a(N)-F(\alpha)|\rightarrow 0$ as $N\rightarrow \infty$ in Theorem \ref{main}. Corollary \ref{cor} yields an upper bound for the error of convergence. Indeed, let $S=[a,\alpha)$ for an arbitrary $\alpha \in (a,b)$, $\mathcal{X}$ be the domain for the distribution function $F$, and $\mu$ the corresponding measure. We obtain the following bound:

\begin{equation}
\left\vert \frac{A([a,\alpha);\ Q_{a(N)})}{a(N)}-F(\alpha) \right\vert \leq
\sum_{j=1}^{K}\mu (M_{j})(G_{j}(1_{[a,\alpha)})-g_{j}(1_{[a,\alpha)}))+\epsilon _{j,a(N)}
\end{equation}
If the point set $Q_{a(N)}$ happens to be an $(\mathcal{M},\mu)$-uniform point set with respect to the partition, then the term $\epsilon _{j,a(N)}$ vanishes.

Next, we will discuss randomized quasi-Monte Carlo (RQMC) methods and another error bound that addresses the bounded variation hypothesis. Although QMC methods have a faster asymptotic convergence rate than MC, measuring the actual error of a QMC estimate is not easy.  As a remedy, one can use RQMC methods. These methods allow independent simulations via QMC, and the resulting estimates can be analyzed statistically. The RQMC method uses a family of $s$-dimensional low-discrepancy sequences $\beta_\textbf{u}=\{x_1,x_2,...\}$, indexed by the random parameter $\textbf{u}$. Each sequence $\beta_\textbf{u}$ gives rise to the quadrature rule
\begin{equation}\label{Qu}
Q(\beta_\textbf{u})=\frac{1}{N}\sum_{i=1}^N f(x_i).
\end{equation}
Then, $I(f)=\int_{(0,1)^s}f(x)dx$ is estimated by taking the average of $M$ samples
\begin{equation}
I(f)\simeq\frac{1}{M}\sum_{m=1}^M Q(\beta_{\textbf{u}_m}).
\end{equation}
RQMC has three general properties:
\begin{enumerate}
\item $E[Q(\beta_\textbf{u})]=I(f)$
\item $Var(Q(\beta_\textbf{u})=O(N^{-2}(\log N)^{2s})$
\item $|Q(\beta_\textbf{u})-I(f)|\leq V(f)D^*(\beta_\textbf{u})$
\end{enumerate}

Let $\mathcal F$ be the class of real continuous functions defined on $[0,1)^s$ and equipped with Wiener sheet measure $\mu$. Theorem \ref {Hickernell thm}  shows that the mean variance of $Q(\beta_\textbf{u})$ under this measure is $O(N^{-2}(\log N)^{2s})$. Since a function $f(x)$ chosen from the Brownian sheet measure has unbounded variation with probability one, this result provides an alternative error analysis approach to classical Koksma-Hlawka inequality which requires the integrand to be of finite variation. This result was obtained by Wang and Hickernell \cite {wang hickernell} (Theorem 5, page 894) for a particular RQMC method called ``random-start Halton sequences". However, their proof is valid for any RQMC method.

\begin{theorem}\label{Hickernell thm}
The average variance of the estimator, $Q(\beta_\textbf{u})$, taken over function set ${\cal F}$, equipped with the Brownian sheet measure $d\mu$, is:
$$\int_{\mathcal{F}}E[Q(\beta_\textbf{u})-I(f)]^2 d\mu = O(N^{-2}(\log N)^{2s}).$$

\end{theorem}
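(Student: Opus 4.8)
The plan is driven by the observation that the Koksma--Hlawka property listed for RQMC is useless here, since a Brownian-sheet function has $V(f)=\infty$ almost surely; the remedy is to average \emph{before} estimating, and to identify the resulting average-case error with an $L^2$-discrepancy. First, for each fixed $f$ the unbiasedness property $E[Q(\beta_\textbf{u})]=I(f)$ makes the bias vanish, so the mean square error reduces to the randomization variance, $E[Q(\beta_\textbf{u})-I(f)]^2=\mathrm{Var}_\textbf{u}(Q(\beta_\textbf{u}))$. As the integrand $(Q(\beta_\textbf{u})-I(f))^2$ is nonnegative, Tonelli's theorem lets me interchange the randomization expectation with the integral over $\mathcal F$,
$$\int_{\mathcal F}E\big[(Q(\beta_\textbf{u})-I(f))^2\big]\,d\mu=E_\textbf{u}\Big[\int_{\mathcal F}\big(Q(\beta_\textbf{u})-I(f)\big)^2\,d\mu\Big],$$
which reduces the problem to estimating the inner average-case quantity for a fixed deterministic point set $\beta_\textbf{u}=\{x_1,\dots,x_N\}$ and then taking the expectation over $\textbf{u}$.

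For fixed $\beta_\textbf{u}$ I would expand the square and integrate term by term against $d\mu$, using the Brownian-sheet covariance $\int_{\mathcal F}f(x)f(y)\,d\mu=\prod_{k=1}^s\min(x_k,y_k)$, to obtain
$$\int_{\mathcal F}\big(Q(\beta_\textbf{u})-I(f)\big)^2\,d\mu=\frac{1}{N^2}\sum_{i,j}\prod_{k=1}^s\min(x_{i,k},x_{j,k})-\frac{2}{N}\sum_{i}\prod_{k=1}^s\Big(x_{i,k}-\frac{x_{i,k}^2}{2}\Big)+\frac{1}{3^{s}}.$$
The crucial point is that the right-hand side is exactly the square of the (corner-$\mathbf 1$) $L^2$-star discrepancy of $\beta_\textbf{u}$: the white-noise representation $f(x)=\int 1_{[0,x]}\,dB$ together with the It\^o isometry rewrites the error functional as $\int_{[0,1]^s}\big(A([t,\mathbf 1);\beta_\textbf{u})/N-\prod_k(1-t_k)\big)^2\,dt$, the same expression up to the reflection $t\mapsto\mathbf 1-t$. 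I will write this quantity as $(D_2^*(\beta_\textbf{u}))^2$.

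It then remains to pass from the $L^2$-discrepancy to the star discrepancy via $D_2^*(\beta_\textbf{u})\le D^*(\beta_\textbf{u})$, and to use that each member of the RQMC family is a low-discrepancy sequence, so that $D^*(\beta_\textbf{u})=O\big(N^{-1}(\log N)^s\big)$ with an implied constant uniform in the random parameter $\textbf{u}$. Squaring yields the deterministic, uniform bound $(D^*(\beta_\textbf{u}))^2=O\big(N^{-2}(\log N)^{2s}\big)$, so the outer expectation $E_\textbf{u}[\,\cdot\,]$ preserves the order and the theorem follows. I expect the main obstacle to be the second paragraph: verifying that the three covariance integrals collapse \emph{exactly} to an $L^2$-discrepancy (the Warnock-type identity, equivalently the It\^o-isometry step) and correctly handling the corner and reflection conventions of the Brownian sheet. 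The remaining ingredients---the inequality $D_2^*\le D^*$ and the uniform low-discrepancy estimate for the family---are routine by comparison.
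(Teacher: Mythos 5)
The paper does not actually contain a proof of this theorem: it simply invokes Theorem 5 of Wang and Hickernell and remarks that their argument is not specific to random-start Halton sequences. Your proposal is, in substance, a correct reconstruction of exactly that cited argument: unbiasedness plus Tonelli reduces the problem to the average-case error of a fixed point set under the Wiener sheet measure; the covariance expansion you write down is the Warnock/Wo\'{z}niakowski identity expressing that average-case error as a squared $L^2$-star discrepancy (of the reflected points); and the bound then follows from $D_2^*\le D^*$ together with the deterministic discrepancy bound for each member of the family. The two places you flag as delicate are indeed the only ones needing care, and both are benign: (i) the identity yields the $L^2$-discrepancy anchored at the corner $\textbf{1}$, and passing back to the usual star discrepancy costs at most a dimension-dependent constant (e.g.\ via the extreme discrepancy, a factor $2^s$), which is absorbed into the $O(\cdot)$; (ii) the estimate $D^*(\beta_\textbf{u})=O(N^{-1}(\log N)^s)$ must hold with an implied constant uniform in $\textbf{u}$ --- this is a property of the particular randomization rather than of RQMC in general, but it does hold for random-start Halton sequences (the family used in the paper) and for the standard scramblings, so your final step of taking the outer expectation is justified. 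In short, your proof is correct and coincides with the argument the paper delegates to the literature.
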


In our numerical results that follow, we use random-start Halton sequences (\cite{okten}, \cite {wang hickernell}). Theorem \ref{collision} can be used to analyze error for both inverse transformation and acceptance-rejection implementations that we will discuss. Theorem \ref{Hickernell thm} applies only for the inverse transformation implementations, since it is not known whether the accepted points given by the acceptance-rejection algorithm satisfy the discrepancy bound $O(N^{-1}(\log N)^{s})$.

\section{Smoothing} \label{smooth section}

In this section we will compare the QMC Acceptance-Rejection algorithm with the smoothed acceptance-rejection algorithms by Moskowitz \& Caflisch \cite{caflisch}, and Wang \cite{wang}. The algorithms will be compared numerically in terms of efficiency, which is defined as sample variance times computation time. We will use the same numerical examples that were considered in \cite{caflisch} and \cite{wang}.

Consider the problem of estimating the integral $I(f)=\int_{(0,1)^s}f(x)dx$ using the importance function $p(x)$
$$ I(f)=\int_{(0,1)^s}\frac{f(x)}{p(x)}p(x)dx.$$ The MC estimator for $I(f)$ is
$$ \widetilde{I}(f)=\frac{1}{N}\sum_{i=1}^{N}\frac{f(x_i)}{p(x_i)}, x_i\sim p(x).$$
The standard acceptance-rejection algorithm, Algorithm 1, takes the following form for this problem:

\begin{algorithm}{}{Acceptance-Rejection}
\\
\begin{enumerate}
    \item Select $\gamma\geq \sup_{x\in (0,1)^s}p(x)$
		\item Repeat until $N$ points have been accepted:
		\begin{itemize}
	   	\item Sample $x_i\in (0,1)^s$, $y_i\in (0,1)$
			\item If $y_i<\frac{p(x_i)}{\gamma}$, accept $x_i$\\
			      Otherwise, reject $x_i$
			
		\end{itemize}
\end{enumerate}
\end{algorithm}

The smoothed acceptance-rejection method of Moskowitz and Caflisch \cite{caflisch} introduces a weight function $w(x,y)$ such that $$\int_0^1w(x,y)dy=\frac{p(x)}{\gamma}, x\in (0,1)^s, y\in (0,1).$$ The weight function $w(x,y)$ is generated by the following algorithm we call SAR1.

\begin{algorithm}{}{Algorithm 3 (SAR1): Smoothed acceptance-rejection by Moskowitz and Caflisch \cite{caflisch}}\label{A1}
\\
\begin{enumerate}
\item  Select $\gamma\geq \sup_{x\in (0,1)^s}p(x)$, and $0<\sigma<<1$
\item Repeat until weight of accepted points is within one unit of $N$:
\begin{itemize}
\item Sample $x_i\in (0,1)^s$, $y_i\in (0,1)$
\item If $y_i < \frac{p(x_i)}{\gamma}-\frac{1}{2}\sigma$ set $w=1$\\
      Else if $y_i > \frac{p(x_i)}{\gamma}+\frac{1}{2}\sigma$ set $w=0$\\
			Else set $w=\frac{1}{\sigma}(\frac{p(x_i)}{\gamma}+\frac{1}{2}\sigma-y_i)$

\end{itemize}

\end{enumerate}
\end{algorithm}

Wang \cite{wang} extended the SAR1 algorithm by choosing functions $A(x)$, $B(x)$ such that
$$0\leq A(x)<p(x)<B(x)\leq \gamma, ~x\in (0,1)^s, \gamma\geq \sup_{x\in (0,1)^s}p(x),$$
and setting the weight function using the following algorithm (which we call SAR2).

\begin{algorithm}{}{Algorithm 4 (SAR2): Smoothed acceptance-rejection by Wang \cite{wang}}\label{A2}
\\
\begin{enumerate}
\item  Select $\gamma\geq \sup_{x\in (0,1)^s}p(x)$, and functions $A(x)$, $B(x)$ such that
$$0\leq A(x)<p(x)<B(x)\leq \gamma, x\in (0,1)^s$$
\item Repeat until weight of accepted points is within one unit of $N$:
\begin{itemize}
\item Sample $x_i\in (0,1)^s$, $y_i\in (0,1)$
\item If $y_i < \frac{A(x_i)}{\gamma}$ set $w=1$\\
      Else if $y_i\geq \frac{B(x)}{\gamma}$ set $w=0$\\
      Else if $\frac{p(x_i)}{\gamma}\geq y_i > \frac{A(x_i)}{\gamma}$ set  $w=1+\frac{(p(x_i)-B(x_i))(\gamma y_i-A(x_i))}{(B(x_i)-A(x_i))(p(x_i)-A(x_i))}$\\
			Else set $w=\frac{(p(x_i)-A(x_i))(\gamma y_i-B(x_i))}{(B(x_i)-A(x_i))(p(x_i)-B(x_i))}$

\end{itemize}

\end{enumerate}
\end{algorithm}

Now we consider the example used in \cite{caflisch} (Example 3, page 43) and \cite{wang}. The problem is to estimate the integral $I(f)=\int_{(0,1)^s}f(x)dx$, where $s=7$ and
 $$f(x)=\exp(1-(\sin^2(\frac{\pi}{2}x_1)+\sin^2(\frac{\pi}{2}x_2)+\sin^2(\frac{\pi}{2}x_3)))\arcsin(\sin(1)+\frac{x_1+...+x_7}{200}).$$
The importance function is
 $$
p(x)=\frac{1}{C}\exp(1-(\sin^2(\frac{\pi}{2}x_1)+\sin^2(\frac{\pi}{2}x_2)+\sin^2(\frac{\pi}{2}x_3))),
$$
where
$$C=\int_{(0,1)^7}\exp(1-(\sin^2(\frac{\pi}{2}x_1)+\sin^2(\frac{\pi}{2}x_2)+\sin^2(\frac{\pi}{2}x_3)))dx=(\int_0^1\exp(-\sin^2(\frac{\pi}{2}x))dx)^3.$$
Three estimators are used:
\begin{itemize}
\item Crude Monte Carlo (CR):
 $$\frac{1}{N}\sum_{i=1}^N f(x_i),~x_i\sim U((0,1)^s)$$
\item Acceptance-Rejection (AR):
$$\frac{1}{N}\sum_{i=1}^N f(x_i)/p(x_i),~x_i\sim p(x),~x_i \textit{ are accepted points}$$
\item Smoothed Acceptance-Rejection (SAR1 and SAR2): $$\frac{1}{N}\sum_{i=1}^{N^*} w(x_i,y_i)f(x_i)/p(x_i),$$
where $N^*$ is a positive integer such that $\sum_{i=1}^{N^*}w(x_i,y_i)$ is approximately $N$.

\end{itemize}
Table \ref{W1} displays the efficiency of the algorithms. We normalize the efficiency of the algorithms by the efficiency of the crude Monte Carlo algorithm. For example, the efficiency of the Acceptance-Rejection (AR) algorithm, $\textit{Eff}_{AR}$ is computed by
\begin{equation}\label{E1}
\textit{Eff}_{AR}=\frac{\sigma_{CR}^2 \times t_{CR}}{\sigma_{AR}^2 \times t_{AR}},
\end{equation}
where $\sigma_{CR}$ is the sample standard deviation of $M$ estimates obtained using the crude Monte Carlo algorithm, and $t_{CR}$ is the corresponding computing time. Similarly, the parameters $\sigma_{AR}$ and $t_{AR}$ refer to the sample standard deviation and computing time for the Acceptance-Rejection (AR) algorithm.

Although we are primarily interested in how these algorithms compare when they are used with low-discrepancy sequences, for reference, we also report efficiencies when the algorithms are used with pseudorandom numbers. The first part of the table reports the Monte Carlo values (MC) where the pseudorandom sequence Mersenne twister \cite{mersenne} is used, and the second part reports the (randomized) quasi-Monte Carlo (RQMC) values where random-start Halton sequences (\cite{okten}, \cite{wang hickernell}) are used.

In the numerical results, $M=64$, $\sigma=0.2$ in the algorithm SAR1, and $A(x)=1/Ce^2$, $B(x)=e/C$ in the algorithm SAR2. We consider the same sample sizes $N$ as in \cite{caflisch} so that our results can be compared with theirs. Table \ref{W1} reports the sample standard deviation and efficiency (in parenthesis) for each algorithm. Note that in our notation, larger efficiency values suggest the method is better.

Based on the numerical results in Table \ref{W1}, we make the following conclusions. In QMC, the Acceptance-Rejection (AR) algorithm has better efficiency than the smoothed algorithms SAR1 and SAR2, by approximately factors between 2 and 28. A part of the improved efficiency is due to the faster computing time of the AR algorithm. However, the AR algorithm also provides lower standard deviation for all samples. In the case of MC, the AR algorithm has still better efficiency, but with a smaller factor of improvement. 

\begin{table}[h]

\centering
\begin{tabular}{c|cccc|cccc}
\hline
\multirow{2}{*} {N} & \multicolumn{4}{|c|}{MC}& \multicolumn{4}{|c}{QMC}\\
\cline{2-5}\cline{6-9}
 &CR&SAR1&SAR2&AR&CR&SAR1& SAR2& AR\\
\hline
\hline
 256&$3.1e^{-2}$&8.8$e^{-4}$&8.8$e^{-4}$&3.2$e^{-4}$&2.7$e^{-3}$&7.0$e^{-4}$&7.3$e^{-4}$&1.5$e^{-4}$\\

&(1)&(211)&(317)&(3972)&(57)&(266)&(255)&(7233) \\
\hline
1024&1.6$e^{-2}$&  2.7$e^{-4}$&2.3$e^{-4}$&1.4$e^{-4}$&6.2$e^{-4}$  &2.1$e^{-4}$&2.2$e^{-4}$&7.8$e^{-5}$ \\

&(1)  &(652)&(1252)&(5829)& (284)     &(686)&(656) &(6359)\\
\hline
4096&8.8$e^{-3}$&    7.8$e^{-5}$&8.8$e^{-5}$&8.0$e^{-5}$ & 1.6$e^{-4}$  &5.0$e^{-5}$&5.0$e^{-5}$&2.6$e^{-5}$\\
&(1)&   (2595)&(2600)&(5610)& (1430)    &(3872)&(3872)& (20308)\\
\hline
16384&3.9$e^{-3}$&    3.6$e^{-5}$&3.5$e^{-5}$&4.2$e^{-5}$& 4.2$e^{-5}$    &1.3$e^{-5}$&1.2$e^{-5}$ &9.6$e^{-6}$\\

&(1)&      (2492)&(2786)&(3900)&(3900)   &(9900)&(12350) &(25594) \\
\hline
\end{tabular}
\caption{Comparison of acceptance-rejection algorithm AR with its smoothed versions SAR1 and SAR2, in terms of sample standard deviation and efficiency (in parenthesis).} \label{W1}
\end{table}

\section{Applications}

Our main motivation is to develop fast and accurate QMC algorithms for the simulation of a particular L\'{e}vy process known as the variance gamma model (\cite{madan}, \cite{madan2}). This model is used in financial mathematics, and its QMC simulation is expensive due to the inverse transformation method as mentioned in Section \ref{intro}. There are several ways to simulate the variance gamma model (\cite{fu}) and the methods involve generation of normal, beta, and gamma distributions. We will present QMC algorithms based on acceptance-rejection for generating beta and gamma distributions, and numerically compare them with their counterparts based on the inverse transformation method. Then we will present numerical results from the pricing of financial options under the variance gamma model. In all the numerical results, Mersenne twister \cite{mersenne} is used for Monte Carlo, and random-start Halton sequences (\cite{okten}, \cite{wang hickernell}) are used for quasi-Monte Carlo. Clearly, the advantages of the algorithms we present for beta and gamma distributions go beyond the specific application we consider in this paper.

\subsection{Generating beta distribution}

The beta distribution, ${\cal B}(\alpha,\beta)$, has the density function
\begin{equation}
f(x)=\frac{x^{\alpha-1}(1-x)^{\beta-1}}{\textbf{B}(\alpha,\beta)},~0<x<1,
\end{equation}
where $\alpha,\beta>0$ are shape parameters, and $\textbf{B}(\alpha,\beta)$ is the beta function,
\begin{equation}
\textbf{B}(\alpha,\beta)=\int_0^1 t^{\alpha-1}(1-t)^{\beta-1}dt.
\end{equation}

There are different algorithms for the generation of the beta distribution ${\cal B}(\alpha,\beta)$ depending on whether $\min(\alpha,\beta)>1$, $\max(\alpha,\beta)<1$, or neither. The inverse transformation method is especially slow when $\alpha$ or $\beta$ is small. For this reason, we will concentrate on the case $\max(\alpha,\beta)<1$, and the Algorithm AW by Atkinson and Whittaker \cite{fishman}. The algorithm uses a combination of composition, inverse transformation, and acceptance-rejection methods. We next introduce a QMC version of this algorithm.

\begin{algorithm} {Algorithm 7: QMC-AW for generating $X\sim{\cal B}(\alpha,\beta)$ distribution where $\max(\alpha,\beta)<1$.}
\begin{enumerate}
\item Input parameters $\alpha,\beta$
\item Set $t=1/[1+\sqrt{\beta(1-\beta)/\alpha(1-\alpha)]}$,~$p=\beta t/[\beta t+\alpha(1-t)]$
\item Generate a low-discrepancy sequence $\omega$ on $(0,1)^2$
$$\omega=\{(u_i,v_i)\in (0,1)^2, i=1,2,...\}$$
\item For $i=1,2,...$
  \begin{itemize}
      \item Set $Y=-\log u_i$
      \item If $v_i\leq p$
         \begin{itemize}
            \item $X=t(v_i/p)^{1/\alpha}$
            \item If $Y\geq (1-\beta)(t-X)/(1-t)$, accept $X$
            \item If $Y\geq (1-\beta)\log((1-X)/(1-t))$, accept $X$
            \item Otherwise reject $X$
         \end{itemize}
     \item Otherwise
         \begin{itemize}
            \item $X=1-(1-t)[(1-u_i)/(1-p)]^{1/\beta}$
            \item If $Y\geq (1-\alpha)(X/t-1)$, accept $X$
            \item If $Y\geq (1-\alpha)\log(X/t)$, accept $X$
            \item Otherwise reject $X$
         \end{itemize}
    \end{itemize}
\item Stop when the necessary number of points have been accepted.
\end{enumerate}
\end{algorithm}

In Table \ref{beta table} we consider several values for $\alpha$ and $\beta$ that are less than one. \textbf{AR MC} and \textbf{AR QMC} in the table refer to the MC version of Algorithm AW, and its QMC version (Algorithm 7), respectively. The inverse transformation method\footnote{The inverse transformation code we used is a C++ code written by John Burkardt (http://people.sc.fsu.edu/~jburkardt/), and it is based on algorithms by Cran et. al. \cite{cran} and Majumder and Bhattacharjee \cite{majumder}. The performance of the inverse transformation method greatly depends on the choice of tolerance for the method. A large tolerance can result in values that fail the Anderson-Darling goodness-of-fit test. A smaller tolerance increases the computing time. Therefore, in our numerical results, we set tolerances for different range of parameter values small enough so that the results pass the goodness-of-fit test. For $\alpha$, $\beta<1$, we set the tolerance to $10^{-8}$.} with MC and QMC is labeled as \textbf{Inverse MC} and \textbf{Inverse QMC}. We generate $10^5$ numbers from each distribution, and record the computing time and the Anderson-Darling statistic of the sample, in Table \ref{beta table}. 

\begin{table}[h]
\centering
\begin{tabular}{cc|ccccc}
\hline
\multirow{2}{*} {Algorithms} &&Inverse&AR &Inverse&AR\\
 & &MC&MC&QMC&QMC\\
\hline
\hline

\multirow{2}{*}{${\cal B}(0.3,0.3)$}&Time(s)&0.32&0.04&0.32&0.04\\

&$A^2$&4.07e-1&1.35&1.13e-1&8.7e-4\\
\hline

\multirow{2}{*}{${\cal B}(0.3,0.5)$}&Time(s)&0.33& 0.03&0.33&0.03\\

&$A^2$&1.67&7.65e-1&8.64e-2&2.24e-3\\
\hline

\multirow{2}{*}{${\cal B}(0.3,0.7)$}&Time(s)&0.34& 0.03&0.34&0.03\\
&$A^2$&1.65&8.81e-1&1.33e-1&7.5e-4\\

\hline

\multirow{2}{*}{${\cal B}(0.5,0.3)$}&Time(s)&0.33&0.03&0.33&0.03\\

&$A^2$&2.77&8.39e-1&9.22e-2&6.4e-4\\
\hline
\multirow{2}{*}{${\cal B}(0.5,0.5)$}&Time(s)&0.32& 0.03&0.32&0.02\\

&$A^2$&2.34&5.68e-1&2.1e-4&2.56e-3\\
\hline
\multirow{2}{*}{${\cal B}(0.5,0.7)$}&Time(s)&0.33& 0.03&0.34&0.02\\

&$A^2$&6.24e-1&5.47e-1&2.5e-4&5.5e-4\\
\hline
\multirow{2}{*}{${\cal B}(0.7,0.3)$}&Time(s)&0.33& 0.03&0.33&0.03\\

&$A^2$&6.86e-1&7.12e-1&1.35e-1&1.49e-3\\
\hline
\multirow{2}{*}{${\cal B}(0.7,0.5)$}&Time(s)&0.33& 0.03&0.33&0.03\\

&$A^2$&1.92&2.15&2.7e-4&8.9e-4\\
\hline
\multirow{2}{*}{${\cal B}(0.7,0.7)$}&Time(s)&0.33& 0.03&0.35&0.03\\
&$A^2$&8.99e-1&2.06&1.6e-4&5.7e-4\\

\hline

\end{tabular}
\caption{Comparison of inverse and acceptance-rejection algorithms, Algorithm AW (for MC) and Algorithm 7 (QMC-AW), in terms of the computing time and the Anderson-Darling statistic of the sample for the Beta distribution when $N=10^5$ numbers are generated. The percentage points for the $A^2$ statistic at 5\% and 10\% levels are 2.49 and 1.93, respectively.}
\label{beta table}
\end{table}

We make the following observations:
\begin{enumerate}
\item The Acceptance-Rejection algorithm runs about 10 times faster than the inverse transformation algorithm, in both MC and QMC implementations. There is no significant difference in the computing times between MC and QMC, for each algorithm.
\item Inverse MC fails the Anderson-Darling test at the 5\% level for ${\cal B}(0.5,0.3)$. There are several Anderson-Darling values close to the 10\% percentage point $1.93$ for Inverse MC and AR MC methods. Switching to QMC improves the Anderson-Darling values significantly for both inverse and acceptance-rejection, implying better fit of the samples to the theoretical distribution. The Inverse QMC Anderson-Darling values range between $10^{-1}$ and $10^{-4}$. The AR QMC Anderson-Darling values are more stable, and range between $10^{-3}$ and $10^{-4}$.

\end{enumerate}

\subsection{Generating gamma distribution}

The gamma distribution, ${\cal G}(\alpha,\beta)$, has the following property: if $X$ is a random variable from ${\cal G}(\alpha,1)$ then $\beta X$ is the random variable from ${\cal G}(\alpha, \beta)$. Therefore, we only need algorithms to generate random variables from ${\cal G}(\alpha,1)$, which has the density function
\begin{equation}
f(x)=\frac{x^{\alpha-1}e^{-x}}{{\bf\Gamma}(\alpha)},~~~~~\alpha>0,~ x\geq 0.
\end{equation}
Here $\bf\Gamma$ is the gamma function
\begin{equation}
{\bf\Gamma}(z)=\int_0^\infty e^{-t}t^{z-1}dt,
\end{equation}
where $z$ is a complex number with positive real part.

We will consider two algorithms for generating the gamma distribution and present their QMC versions. The algorithms are:
\begin{itemize}
\item Algorithm CH by Cheng \cite{Cheng}, which is applicable when $\alpha>1$,
\item Algorithm GS* by Ahrens-Dieter \cite{fishman}, which is applicable when $\alpha<1$.
\end{itemize}

Next we introduce the QMC versions of these algorithms.

\begin{algorithm}
Algorithm 8: QMC-CH for generating $X\sim{\cal G}(\alpha,1)$ distribution where $\alpha >1$
\begin{enumerate}
\item Input gamma parameter $\alpha$
\item Set $a=(2\alpha-1)^{-1/2},~b=\alpha-\log 4,~c=\alpha+a^{-1}$
\item Generate a low-discrepancy sequence $\omega$ on $(0,1)^2$
$$\omega=\{(u_i,v_i)\in (0,1)^2, i=1,2,...\}$$
\item For $i=1,2,...$
  \begin{itemize}
    \item Set $Y=a\log(\frac{u_i}{1-u_i}),~X=\alpha e^Y,~Z=u_i^2v_i,~R=b+cY-X$
    \item If $R+2.5040774-4.5Z\geq 0$, accept $X$
    \item If $R\geq\log Z$, accept $X$
    \item Otherwise reject $X$
  \end{itemize}
\item Stop when the necessary number of points have been accepted.	
\end{enumerate}
\end{algorithm}

\begin{algorithm}{Algorithm 9: QMC-GS* for generating $X\sim{\cal G}(\alpha,1)$ distribution where $\alpha<1$}
\begin{enumerate}
\item Input gamma parameter $\alpha$
\item Set  $b=(\alpha+e)/e$
\item Generate a low-discrepancy sequence $\omega$ on $(0,1)^3$
$$\omega=\{(u_i,v_i,w_i)\in (0,1)^3, i=1,2,...\}$$
\item For $i=1,2,...$
  \begin{itemize}

         \item Set $Y=bu_i$

         \item If $Y\leq 1$, set $X=Y^{1/\alpha}$
             \begin{itemize}
                \item  Set $W=-\log v_i$ 
                \item  If $W\geq X$ accept $X$
                \item Otherwise reject $X$
             \end{itemize}
         \item Otherwise
             \begin{itemize}
                 \item Set $X=-\log[(b-Y)/\alpha]$
                 \item Set $W=w_i^{1/(\alpha-1)}$
                 \item If $W\leq X$ accept $X$
                 \item Otherwise reject $X$
              \end{itemize}
    \end{itemize}
\item Stop when the necessary number of points have been accepted.
\end{enumerate}
\end{algorithm}

In Tables \ref{gamma1} and \ref{gamma2}, we consider several parameters for $\alpha$ in ${\cal G}(\alpha,1)$. Table \ref{gamma1} has $\alpha$ values greater than one, and Table \ref{gamma2} has values less than one. The parameters are chosen roughly in the range that was observed in the simulation of the variance gamma option pricing problem we will discuss later.

We generate $10^6$ numbers from the corresponding gamma distribution, and compute the execution time of the algorithm and the Anderson-Darling statistic of the sample. Table \ref{gamma1} gives the results for Algorithm CH and Table \ref{gamma2} for Algorithm GS*. The labels \textbf{AR MC} and \textbf{AR QMC} in the tables refer to the MC and QMC versions of Algorithm CH and Algorithm GS*. The inverse transformation method\footnote{The inverse transformation code we used is a C++ code written by John Burkardt, and it is based on two algorithms by Lau \cite{Lau} and Mcleod \cite{Mcleod}. Similar to the beta distribution, the performance of the inverse transformation method greatly depends on the choice for the tolerance parameter. In our numerical results, we set tolerances for different range of values for $\alpha$ small enough so that the results pass the goodness-of-fit test. For example, for $0.25>\alpha \ge 0.20$, we set the tolerance to $10^{-14}$, while we only need to set the tolerance to $10^{-6}$ for $\alpha >1$. The convergence of the inverse transformation method was especially problematic for smaller $\alpha$, for example, when $\alpha<0.1$.}
with MC and QMC is labeled as \textbf{Inverse MC} and \textbf{Inverse QMC}.

\begin{table}[h]
\centering
\begin{tabular}{cc|ccccc}
\hline
\multirow{2}{*} {Algorithms} &&Inverse&AR &Inverse&AR\\
 & &MC&MC&QMC&QMC\\
\hline
\hline

\multirow{2}{*}{${\cal G}(1.6,1)$}&Time(s)&10.62&0.30&10.60& 0.29\\

&$A^2$&2.90e-1&1.09&1.38e-4&8.6e-4\\
\hline

\multirow{2}{*}{${\cal G}(2.0,1)$}&Time(s)&10.92& 0.29&8.21&0.28\\

&$A^2$&8.78e-1&7.52e-1&1.93e-4&1.78e-3\\
\hline

\multirow{2}{*}{${\cal G}(2.4,1)$}&Time(s)&12.32& 0.29&11.74&0.28\\
&$A^2$&9.57e-1&1.83&1.0e-4&2.2e-4\\

\hline

\multirow{2}{*}{${\cal G}(2.8,1)$}&Time(s)&12.41&0.28&13.17&0.27\\

&$A^2$&1.01&5.09e-1&1.1e-4&2.34e-3\\
\hline
\multirow{2}{*}{${\cal G}(3.2,1)$}&Time(s)&12.62& 0.28&12.64&0.27\\

&$A^2$&7.83e-1&5.67e-1&1.3e-4&1.21e-3\\
\hline
\end{tabular}
\caption{Comparison of inverse and acceptance-rejection algorithms, Algorithm CH (for MC) and Algorithm 8 (QMC-CH), in terms of the computing time and the Anderson-Darling statistic of the sample for the Gamma distribution when $N=10^6$ numbers are generated. The percentage points for the $A^2$ statistic at 5\% and 10\% levels are 2.49 and 1.93, respectively.}
\label{gamma1}
\end{table}

We make the following observations based on Table \ref{gamma1}:
\begin{enumerate}
\item The AR QMC algorithm runs faster than the Inverse QMC algorithm by approximately factors between 30 and 48. Interestingly, the AR QMC algorithm is slightly faster than the AR MC algorithm for each case. Similarly, the AR MC is faster than Inverse MC, at about the same factors of speed up.
\item All samples pass the Anderson-Darling test at the 5\% level. Switching to QMC drastically lowers the Anderson-Darling values: Inverse QMC values are around $10^{-4}$, and AR QMC values range between $10^{-3}$ and $10^{-4}$. 
\end{enumerate}

\begin{table}[h]
\centering
\begin{tabular}{cc|ccccc}
\hline
\multirow{2}{*} {Algorithms} &&Inverse&AR &Inverse&AR\\
 & &MC&MC&QMC&QMC\\
\hline
\hline

\multirow{2}{*}{${\cal G}(0.2,1)$}&Time(s)&16.77& 0.25&16.77& 0.24\\

&$A^2$&1.73&5.18e-1&1.30&2.8e-4\\
\hline

\multirow{2}{*}{${\cal G}(0.4,1)$}&Time(s)&13.47&0.33&13.49&0.35\\

&$A^2$&6.76e-1&6.75e-1&4.28e-3&3.5e-4\\
\hline

\multirow{2}{*}{${\cal G}(0.6,1)$}&Time(s)&7.74&0.35&7.73&0.36\\

&$A^2$&1.64&1.01&5.15e-2&6.2e-4\\
\hline

\multirow{2}{*}{${\cal G}(0.8,1)$}&Time(s)&8.07& 0.36&8.08&0.36\\

&$A^2$&5.18e-1&1.29&6.99e-3&3.1e-4\\
\hline
\end{tabular}
\caption{Comparison of inverse and acceptance-rejection algorithms, Algorithm GS* (for MC) and Algorithm 9 (QMC-GS*), in terms of the computing time and the Anderson-Darling statistic of the sample for the Gamma distribution when $N=10^6$ numbers are generated. The percentage points for the $A^2$ statistic at 5\% and 10\% levels are 2.49 and 1.93, respectively.}
\label{gamma2}
\end{table}

We make the following observations based on Table \ref{gamma2}:
\begin{enumerate}
\item The AR QMC algorithm runs faster than the Inverse QMC algorithm by approximately factors between 22 and 70. For smaller $\alpha$ values, the convergence of the inverse transformation method is particularly slow.
\item All samples pass the Anderson-Darling test at the 5\% level. As before, switching to QMC drastically lowers the Anderson-Darling values, especially for AR QMC whose values are around $10^{-4}$. The Inverse QMC values range between 1.30 and $10^{-3}$.
\end{enumerate}

\subsection{Variance gamma model for option pricing}
The variance gamma (VG) model is a generalization of the classical Black-Scholes model for the dynamics of stock prices. The VG process $X(t;\sigma,\nu,\theta)$ is defined as
$$X(t;\sigma,\nu,\theta)=B(G(t;1,\nu);\theta,\sigma)$$
where $B(t;\theta,\sigma)$ is a Brownian motion and $G(t;1,\nu)$ is a Gamma process with a unit mean rate. The VG process, in other words, is a Brownian motion evaluated at a time given by a Gamma process.

The VG process can be simulated by sequential sampling, and bridge sampling. Within the sequential sampling approach, there are several algorithms to generate the process as well. A review of these generation algorithms can be found in \cite{fu}. These various algorithms require generation from the normal, gamma, and beta distributions. In our numerical results that follow, we used sequential sampling and the Gamma time-changed Brownian motion algorithm to generate the VG process.

In Table \ref{VG}, we report the price of a European call option with various maturities (given in the first column), when the underlying process is VG. The parameters of the option are taken from an example in \cite{webber}. The generation of the VG model with these particular parameters require the generation of the following distributions: ${\cal G}(0.83,0.3), {\cal G}(1.67,0.3), {\cal G}(2.5,0.3), {\cal G}(3.33,0.3)$ corresponding to $T=0.25, 0.5,0.75, 1.0$, as well as the normal distribution. We use the inverse transformation method to generate the normal distribution, and Algorithms 8 and 9 to generate the gamma distribution. We call this approach \textbf{AR MC} and \textbf{AR QMC}, for the MC and QMC implementations of these algorithms, in Table \ref{VG}. Methods \textbf{Inverse MC} and \textbf{Inverse QMC} generate gamma and normal distributions using the inverse transformation method.

In the \textbf{AR QMC} method, there are two cases. If the maturity is $T=0.25$, then we generate a 4-dimensional (randomized) QMC vector $(q_1, q_2, q_3, q_4)$. The first component is used to sample from the normal distribution using the inverse transformation method, and the last three components are used to sample from the gamma distribution using Algorithm 9. If $T>0.25$, then we generate a 3-dimensional (randomized) QMC vector, use its first component to sample from the normal distribution by the inverse transformation method, and use the last two components to sample from the gamma distribution by Algorithm 8. In the \textbf{Inverse QMC} method, to obtain one option price, we generate a 2-dimensional (randomized) QMC vector $(q_1, q_2)$. The first component is used to sample from the normal distribution, and the second component is used to sample from the gamma distribution.

For each maturity, we compute the option price by generating 10,000 stock price paths. We then independently repeat this procedure 100 times. The sample standard deviation of the resulting 100 estimates is reported in Table \ref{VG} (Std dev), together with the average of the 100 estimates (Price), and the total computing time. The last column reports the ratio of the efficiency of AR QMC to Inverse QMC. Here efficiency is the product of sample variance and computing time.  The second column ``Exact price" reports the analytical value of the option price (see \cite{madan2}). These values are taken from an example in \cite{webber}.

The Inverse QMC and AR QMC methods give estimates that agree with the exact solution to the cent. However, the acceptance-rejection algorithms run faster than the inverse algorithms by factors between 13 and 19. The AR QMC method has better efficiency than Inverse QMC, and the factors of improvement are between 2 and 6.

\begin{table}[h]

\centering
\begin{tabular}{ccc|ccccc}
\hline
\multirow{2}{*} {Maturity} & Exact && Inverse&AR &Inverse&AR&QMC EffGain\\
 & price & & MC & MC & QMC & QMC & AR/Inv\\
\hline
\hline

\multirow{3}{*}{0.25}&&Std dev&3.9e-2&3.5e-2&2e-3&3e-3&\\
&3.47&Price&3.47&3.47&3.47&3.47&6\\

&&Time(s)&9.55&0.74&9.56&0.71\\
\hline

\multirow{3}{*}{0.50}&&Std dev&6.2e-2&6.3e-2&2e-3&5e-3&\\
&6.24&Price&6.23&6.25&6.24&6.24&3\\

&&Time(s)&10.43&0.66&10.9&0.64&\\
\hline

\multirow{3}{*}{0.75}&&Std dev&7.9e-2&9.3e-2&3e-3&7e-3&\\
&8.69&Price&8.68&8.70&8.69&8.69&3\\

&&Time(s)&11.18&0.65&11.29&0.62&\\
\hline

\multirow{3}{*}{1.00}&&Std dev&1.06e-1&1.03e-1&3e-3&1e-2&\\
&10.98&Price&10.99&10.99&10.98&10.98&2\\

&&Time(s)&11.86&0.63&11.86&0.61&\\
\hline
\end{tabular}
\caption{Comparison of inverse and acceptance-rejection methods in pricing European call options in the variance gamma model. The option parameters are: $\theta=-0.1436$, $\sigma=0.12136$, $\nu=0.3$, initial stock price $S_0=100$, strike price $K=101$, and risk free interest rate $r=0.1$.}
\label{VG}
\end{table}

\section{Conclusions}

The use of low-discrepancy sequences in computational problems, especially in numerical integration, is increasing mainly because of the faster convergence rates these sequences provide, compared to pseudorandom sequences. For example, in the application of derivative pricing from computational finance, this faster rate of convergence is quite useful, and some well known low-discrepancy sequences have taken their place in the numerical methods toolbox of financial engineers.

Currently, the main method for transforming low-discrepancy sequences to nonuniform distributions is the inverse transformation technique. However, this technique can be computationally expensive for complicated distributions. The acceptance-rejection technique was developed precisely for this reason for pseudorandom sequences. In this paper, we presented theoretical and numerical results to argue that the acceptance-rejection technique is similarly useful in the context of low-discrepancy sequences. The availability of acceptance-rejection for low-discrepancy sequences significantly increases the scope of applications where quasi-Monte Carlo methods can improve traditional Monte Carlo. There is an extensive literature on efficient Monte Carlo algorithms for generating distributions, and many of them are based on acceptance-rejection. The results of this paper motivate the study of quasi-Monte Carlo versions of these algorithms.

\section{Acknowledgement}
We thank Dr. Burkardt, Department of Scientific Computing, Florida State University, for the inverse transformation codes used in this paper.

\end{document}